\documentclass{article}
\usepackage{graphicx} % Required for inserting images
\usepackage{amsmath}
\usepackage{amsthm}
\usepackage{amsfonts} 

\newtheorem{theorem}{Theorem}      % Theorem 1, Theorem 2, ...
\newtheorem{lemma}{Lemma}
\newtheorem{proposition}{Proposition}

\usepackage{natbib}

\title{No Last Mile: A Theory of the Human Data Market}
\author{Ali Ansari, Mark Esposito, Ava Fitoussy, Liu Zhang}
\date{February 2026}

\begin{document}

\maketitle

\begin{abstract}
The standard framing treats structured human-data work as transitional, a bridge between today’s imperfect models and a future state where automation is complete. We challenge this view by modeling \emph{structured human data} as a persistent production input: evaluation, rubric-based judgment, auditing, exception handling, and continual updates that convert raw model capability into dependable, deployable performance. These activities accumulate into a reusable AI capability stock that raises productivity by improving reliability on existing tasks and by expanding the frontier of task families for which AI can be used at high confidence. Crucially, this capability stock depreciates as tasks and contexts drift, standards evolve, and new edge cases emerge. In a tractable baseline model, an interior steady state implies a closed-form, strictly positive long-run labor share devoted to structured human-data work whenever depreciation is positive, a ``no last mile'' result in which maintenance demand persists even as models improve. We then microfound aggregate capability with a portfolio of task families featuring diminishing returns, frontier entry, and complementarity, generating reallocation toward low-maturity and bottleneck families and a Roy-style mechanism for within-structured wage dispersion. Finally, we map model objects to observable proxies using standard data layers, and provide a conservative calibration suggesting a 5-7$\%$ steady-state structured labor share in the long run.
\end{abstract}

\section{Introduction}
Generative AI has moved from novelty to mainstream workplace technology in a remarkably short time. Survey evidence shows rapid and broad take-up: by late 2024, nearly 40$\%$ of U.S. adults aged 18–64 reported using generative AI, and about 23$\%$ of employed respondents reported using it for work at least once in the prior week \citep{bick2026rapid}. At the same time, firm-level rollout studies show that generative AI can raise productivity in real operational settings. In a large customer-support setting, access to an AI assistant increased issues resolved per hour by about 15$\%$ on average, with much larger gains for less experienced workers \citep{brynjolfsson2025generative}. These findings suggest broad scope for AI to change how work is done and how output is produced, but they also highlight a key measurement challenge: productivity gains observed on well-defined, AI-suitable tasks need not translate one-for-one into aggregate gains. Benefits are uneven across tasks, depend on correct use and workflow redesign, and often require complementary investments, including monitoring, exception handling, updating procedures, and building organizational know-how, before they scale beyond pilot settings. In task-based macroeconomic frameworks, aggregate outcomes therefore hinge on which parts of the task distribution are effectively automated versus complemented, and how quickly the economy expands the set of tasks for which AI is reliably usable \citep{acemoglu2025simple}. This emphasis on task reallocation also aligns with canonical “tasks, not jobs” perspectives in which technology displaces some tasks while creating or reshaping others, so equilibrium effects depend on the pace of task creation and reorganization \citep{autor2015job,acemoglu2018race}.

Early labor-market evidence is consistent with this “reorganization rather than displacement” view. Using representative adoption surveys linked to administrative labor records in Denmark, \citet{humlum2025large} find precisely estimated near-zero effects on earnings and recorded hours over the first two years, ruling out changes larger than about 2$\%$. This pattern is consistent with the idea that the near-term adjustment margin is often reorganization of tasks, not headline job loss: integrating AI into workflows, managing quality, dealing with exceptions, and updating practices as conditions evolve. This motivates a closer look at an ingredient that is central in practice but often under-modeled in macro and technology-adoption frameworks: the ongoing production of structured human data that converts raw model capability into dependable performance.

Our paper formalizes this idea by modeling a structured human-data sector, activities such as evaluation, rubric-based judgment, auditing, exception handling, and continual updates, as a distinct input into production that accumulates a reusable AI capability stock. This stock raises productivity in the economy not only by improving performance on existing AI-exposed tasks, but also by enabling frontier expansion: it increases the set of task families for which AI can be deployed with sufficiently high reliability, thereby shifting the effective “coverage” of AI across the task distribution over time. Because structured outputs are reusable and can be combined across users and firms, they also resemble nonrival data inputs that can generate increasing returns—while still requiring governance and ongoing production to remain valuable as contexts evolve \citep{jones2020nonrivalry}. Crucially, capability stock also depreciates as tasks and contexts drift, standards change, and new edge cases emerge.

Capability depreciation ($\delta_k$) in our model should not be interpreted as transient “statistical decay” remediable by scale alone. In deployed systems, the mapping between codified representations and the world changes over time, via dataset shift and concept drift, so that even accurate codification can lose alignment as contexts evolve \citep{quinonero2008dataset, gama2014survey}. Production-oriented ML frameworks similarly emphasize recurring operational costs, including monitoring, debugging, and updating models and data pipelines, as environments and dependencies change, rather than one-off fixes \citep{sculley2015hidden}. Therefore, we interpret structured data work not only as accumulating an “expertise stock,” but as maintaining relational coherence between AI systems and evolving external structure, including workflows, standards, and user needs \citep{gassman2025coherence}. Historically, successive waves of codification and standardization shifted where expertise is applied but did not eliminate the need for expert maintenance, since formal standards are themselves periodically reviewed and revised \citep{nelson1985evolutionary, rosenberg1982inside}.

Empirically, $\delta_k$ bundles at least three distinct drift channels. \emph{Technological drift} arises when model, retrieval, tooling, or product-surface updates break previously reliable rubrics and evaluation harnesses. \emph{Environmental drift} reflects changes in the underlying task distribution (new intents, new content, seasonality, adversarial behavior) that create novel edge cases even with a fixed model. \emph{Organizational drift} captures shifting standards and stakeholder objectives (policy updates, brand constraints, risk tolerance, or legal requirements) that redefine what counts as a “good” output. Distinguishing these channels provides measurement guidance: technological drift can be proxied by discrete “breakage” around version changes; environmental drift by rising error or disagreement conditional on stable rubrics; and organizational drift by guideline churn and relabeling episodes. A practical empirical approach is to treat $\delta_k$ as an event-driven hazard rate of maturity loss and to estimate $\delta_k=\delta^{tech}+\delta^{env}+\delta^{org}$ using event-study variation around each type of shock.

In equilibrium, this generates a “no last mile” implication: even as foundation models improve, there is no point at which demand for structured human-data work disappears, because maintaining alignment with a changing world requires continuous investment. This perspective complements task-based macro models \citep{acemoglu2025simple} and helps reconcile rapid adoption and micro-level gains \citep{bick2026rapid,brynjolfsson2025generative} with muted short-run labor-market effects \citep{humlum2025large}.

The framework also clarifies why compensation in structured human-data work can be highly dispersed across activities. Structured work spans a wide range of task families, from routinized labeling and standardized checks to expert judgment, domain-specific auditing, and the design of tightly specified rubrics and prompts. In equilibrium, workers sort across these families based on comparative advantage: complex, high-stakes families that reward calibrated judgment and domain knowledge draw from a thinner supply of qualified labor, often requiring screening, training, or accumulated experience, and therefore command a wage premium relative to more easily substitutable work. By contrast, simpler structured tasks with low skill requirements need not exhibit such premia and may be priced near the global opportunity cost of general labor. 

A key implication is dynamic: because structured effort is “leveraged” through the capability stock, improving reliability and expanding the set of tasks that can be deployed, frontier growth reallocates structured work toward the families where drift and exception risk remain high, rather than eliminating demand altogether. This provides a natural counterpoint to the view that structured human-data workers are merely “training replacements”: even as automation advances, the long-run equilibrium features persistent, and potentially rising, demand for high-judgment structured work that maintains alignment as tasks, standards, and environments evolve.
Section \ref{sec:model3} formalizes this dispersion with a Roy-style assignment across task families, where family-specific shadow prices and comparative advantage sorting generate wage premia in thin-supply, high-judgment families and commodity wages in standardized families \citep{roy1951earnings}.

The rest of the paper is structured as follows. Section 2 presents the baseline model and characterizes the equilibrium that delivers the no last mile prediction. Section 3 extends the framework to task families, deriving reallocation across families, frontier expansion, and a transparent mechanism for within-structured wage dispersion through comparative-advantage sorting. Section 4 maps the key objects in the model to measurable proxies, reports a conservative calibration to discipline magnitudes, and outlines an empirical roadmap for testing the paper’s core propositions using standard data sources in the AI labor-market literature. Section 5 concludes.

% =========================
% Section 2: Baseline Model
% =========================
\section{Baseline Model}
\label{sec:modelA}

This section presents a tractable baseline model that delivers a closed-form steady-state share of labor allocated to structured human-data work. The key feature is that structured work accumulates an intangible capability stock that raises aggregate productivity.

\subsection{Environment and sector definitions}

Time is discrete, indexed by $t=0,1,2,\dots$. The economy has a fixed labor endowment $\bar L>0$ each period, allocated between:
\begin{itemize}
  \item \textbf{Unstructured production sector ($U$).} Labor $L_{U,t}$ produces final services/goods delivered case-by-case.
  \item \textbf{Structured human-data sector ($S$).} Labor $L_{S,t}$ produces AI training/evaluation data artifacts that accumulate into a reusable capability stock.
\end{itemize}

Labor feasibility is
\begin{equation}
L_{U,t}+L_{S,t}=\bar L,\qquad L_{U,t},L_{S,t}\ge 0.
\label{eq:labor_feas}
\end{equation}

There is a predetermined capital stock $K>0$ (fixed in the baseline). Aggregate output is
\begin{equation}
Y_t = A(k_t)\, K^\alpha\, L_{U,t}^{\,1-\alpha},\qquad \alpha\in(0,1),
\label{eq:production}
\end{equation}
where $k_t> 0$ is stock of “codified knowledge / AI capability” that raises economy-wide productivity. We assume
\begin{equation}
A(k) = \bar A\,k^{\gamma},\qquad \bar A>0,\ 1>\gamma>0.
\label{eq:A_of_k}
\end{equation}

\subsection{Capability accumulation}

Structured labor accumulates $k_t$ according to
\begin{equation}
k_{t+1} = (1-\delta_k)\,k_t + \phi(L_{S,t}),
\qquad \delta_k\in(0,1),\ \eta>0.
\label{eq:k_law}
\end{equation}
The depreciation term $\delta_k$ captures obsolescence or drift due to distribution shifts, changing objectives, new contexts, and related frontier expansion forces. We assume $\phi(L_{S,t})=\eta L_{S,t}$, where $\eta$ denotes how effectively structured labor converts into capability, assumed to be constant across time on the aggregate level for simplicity. 

\subsection{Competitive wages and labor allocation}

Let $w_{U,t}$ denote the wage in unstructured sector (marginal product of labor). From Equation (\ref{eq:production}),
\begin{equation}
w_{U,t} \equiv \frac{\partial Y_t}{\partial L_{U,t}}
= (1-\alpha)\,A(k_t)\,K^\alpha\,L_{U,t}^{-\alpha}.
\label{eq:wU}
\end{equation}

Interpret “structured sector” as competitive contracting (or a platform passing through the marginal value). Then the wage equals the marginal value product of structured labor.

Structured labor raises $k_t$ by $\eta$. Let $v_t$ denote the shadow value of one additional unit of capability $k_t$, then 
\begin{equation}
w_{S,t} = \eta\,v_t
\label{eq:wS}
\end{equation}

In steady state, the value of an extra unit of $k$ is the present value of its marginal contribution to output:
\begin{equation}
v = \frac{\partial Y/\partial k}{r+\delta_k},
\qquad r>0,
\label{eq:shadow_value}
\end{equation}
where $r$ is a discount rate and $\delta_k$ accounts for effective depreciation.

Given Equation(\ref{eq:A_of_k}) and Equation (\ref{eq:production}),
\begin{equation}
\frac{\partial Y_t}{\partial k_t}
= A'(k_t)\,K^\alpha\,L_{U,t}^{1-\alpha}
= \gamma\,\bar A k_t^{\gamma-1}\,K^\alpha\,L_{U,t}^{1-\alpha}.
\label{eq:dYdk}
\end{equation}

\paragraph{Labor allocation condition.}
In the baseline, labor is freely mobile between sectors and workers take wages as given. An interior allocation satisfies
\begin{equation}
w_{S,t} = w_{U,t}.
\label{eq:w_equal}
\end{equation}

\subsection{Steady state}

A steady state is a triple $(k^*,L_S^*,L_U^*)$ such that:
(i) $k_{t+1}=k_t=k^*$, (ii) $L_U^*=\bar L-L_S^*$, and (iii) Equation (\ref{eq:w_equal}) holds (interior case).

From Equation (\ref{eq:k_law}), the steady-state capability stock satisfies
\begin{equation}
k^* = \frac{\eta}{\delta_k}\,L_S^*.
\label{eq:k_star}
\end{equation}

\begin{theorem}[Existence and uniqueness of an interior steady state]
\label{thm:exist_unique_A}
Suppose $\alpha\in(0,1)$, $\gamma>0$, $r>0$, $\delta_k\in(0,1)$, $\eta>0$, and $\kappa>0$. Then there exists a unique steady state with $0<L_S^*<\bar L$ and $k^*>0$. Moreover, the steady-state structured labor share
$s^*\equiv L_S^*/\bar L$ is given by
\begin{equation}
s^*
= \frac{\gamma\,\delta_k}{\gamma\,\delta_k + (1-\alpha)\,(r+\delta_k)}.
\label{eq:s_star}
\end{equation}
\end{theorem}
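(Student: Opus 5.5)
Substitute the steady-state conditions into the wage-equalization condition (\ref{eq:w_equal}) and reduce everything to a single equation in $L_S^*$ (equivalently in $s^*$); existence, uniqueness and the closed form (\ref{eq:s_star}) then all follow from that equation. The parameter $\kappa$ plays no role in the baseline, so I treat it as inert.

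\textbf{Step 1: structured wage in steady state.} Combining (\ref{eq:wS}), (\ref{eq:shadow_value}) and (\ref{eq:dYdk}) gives
\begin{equation*}
w_S^* = \frac{\eta\,\gamma\,\bar A\,(k^*)^{\gamma-1}K^\alpha (L_U^*)^{1-\alpha}}{r+\delta_k}.
\end{equation*}
I will rewrite this as $w_S^*=\eta\gamma Y^*/\big((r+\delta_k)k^*\big)$. Similarly, (\ref{eq:wU}) gives $w_U^*=(1-\alpha)Y^*/L_U^*$, with $Y^*=\bar A (k^*)^\gamma K^\alpha (L_U^*)^{1-\alpha}>0$ whenever $k^*,L_U^*>0$.

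\textbf{Step 2: reduce to a linear equation.} Setting $w_S^*=w_U^*$, the common positive factor $Y^*$ cancels, leaving $\eta\gamma L_U^* = (1-\alpha)(r+\delta_k)k^*$. Substituting $k^*=(\eta/\delta_k)L_S^*$ from (\ref{eq:k_star}), the factor $\eta$ cancels, giving
\begin{equation*}
\gamma\,\delta_k\,(\bar L - L_S^*) = (1-\alpha)(r+\delta_k)\,L_S^*.
\end{equation*}
This is linear in $L_S^*$ with positive coefficients, so it has exactly one solution:
\begin{equation*}
L_S^*=\bar L\,\frac{\gamma\delta_k}{\gamma\delta_k+(1-\alpha)(r+\delta_k)}.
\end{equation*}
Dividing by $\bar L$ yields (\ref{eq:s_star}).

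\textbf{Step 3: interiority and uniqueness.} Because $\gamma,\delta_k>0$ and $(1-\alpha)(r+\delta_k)>0$, we get $s^*\in(0,1)$, hence $0<L_S^*<\bar L$. Then $k^*=\eta L_S^*/\delta_k>0$ and $L_U^*=\bar L-L_S^*>0$, so the triple satisfies (i)–(iii).

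For uniqueness, note that any interior steady state must satisfy $k^*>0$ and $L_U^*>0$. This justifies dividing by $Y^*$ and $k^*$ in Step 2, so every interior steady state solves the same linear equation and therefore coincides with the one found.

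\textbf{Main obstacle.} The only delicate point is not the algebra but the legitimacy of the cancellations in Step 2. The power terms $k^{\gamma-1}$ and $L_U^{-\alpha}$ are singular at the boundary, so I restrict explicitly to the interior set $L_S\in(0,\bar L)$. On that set everything is positive and the reduction to a linear equation is an equivalence, so existence and uniqueness hold simultaneously.
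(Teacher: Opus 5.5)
Your proposal is correct and follows essentially the same route as the paper: substitute the wage expressions into $w_S^*=w_U^*$, cancel common factors to reach $\eta\gamma L_U^*=(1-\alpha)(r+\delta_k)k^*$, then plug in $k^*=(\eta/\delta_k)L_S^*$ and $L_U^*=\bar L-L_S^*$ and solve the resulting linear equation. Your remark that the cancellations are legitimate because any interior steady state has $k^*>0$ and $L_U^*>0$, and that $\kappa$ plays no role, is simply a more careful rendering of the paper's ``cancel common factors'' step.
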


\begin{proof}
At steady state, Equation (\ref{eq:k_star}) holds. Using Equation (\ref{eq:wU}), (\ref{eq:wS}), and (\ref{eq:dYdk}), the interior condition Equation (\ref{eq:w_equal}) becomes
\[
\eta\cdot \frac{\gamma\,\bar A\,(k^*)^{\gamma-1}\,K^\alpha\,(L_U^*)^{1-\alpha}}{r+\delta_k}
=
(1-\alpha)\,\bar A\,(k^*)^\gamma\,K^\alpha\,(L_U^*)^{-\alpha}.
\]
Cancel common factors and simplify to obtain
\[
\eta \cdot \frac{\gamma}{r+\delta_k}\cdot \frac{L_U^*}{k^*} = (1-\alpha).
\]
Substitute $L_U^*=\bar L-L_S^*$ and $k^*=\frac{\eta}{\delta_k}L_S^*$, then solve the resulting scalar equation for $L_S^*$. The solution is unique and yields Equation (\ref{eq:s_star}). Since parameters imply $0<s^*<1$, we have $0<L_S^*<\bar L$ and $k^*>0$.
\end{proof}

\subsection{Comparative statics and falsifiable implications}

The closed form labor share in steady state Equation (\ref{eq:s_star}) implies the following.

\begin{proposition}[Value of capability and discounting]
\label{prop:gamma_r}
In the baseline steady state, $s^*$ is increasing in $\gamma$ and decreasing in $r$:
\[
\frac{\partial s^*}{\partial \gamma} > 0,
\qquad
\frac{\partial s^*}{\partial r} < 0.
\]
\end{proposition}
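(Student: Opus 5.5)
The plan is to differentiate the closed form in Equation (\ref{eq:s_star}) from Theorem~\ref{thm:exist_unique_A} directly. I would first rewrite it in a form whose monotonicity is transparent. Set
\[
s^* = \frac{1}{1 + R},\qquad R \equiv \frac{(1-\alpha)(r+\delta_k)}{\gamma\,\delta_k},
\]
which is legitimate because $\gamma\delta_k>0$ under the hypotheses of Theorem~\ref{thm:exist_unique_A}. Since $x\mapsto 1/(1+x)$ is strictly decreasing on $(0,\infty)$, and $R>0$ whenever $\alpha\in(0,1)$, $r>0$, $\delta_k\in(0,1)$ and $\gamma>0$, the sign of each partial derivative of $s^*$ is the opposite of the sign of the corresponding partial derivative of $R$.

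Next I would compute the two derivatives of $R$, holding the other parameters fixed. For $\gamma$,
\[
\frac{\partial R}{\partial \gamma} = -\frac{(1-\alpha)(r+\delta_k)}{\gamma^2\delta_k} < 0,
\]
and for $r$,
\[
\frac{\partial R}{\partial r} = \frac{1-\alpha}{\gamma\,\delta_k} > 0.
\]
Both signs follow from $1-\alpha>0$, $r+\delta_k>0$, $\gamma>0$ and $\delta_k>0$. By the chain rule, $\partial s^*/\partial x = -(1+R)^{-2}\,\partial R/\partial x$, which gives $\partial s^*/\partial\gamma>0$ and $\partial s^*/\partial r<0$.

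As a cross-check I would also state the explicit formulas obtained from the quotient rule on Equation (\ref{eq:s_star}), with $D\equiv\gamma\delta_k+(1-\alpha)(r+\delta_k)>0$:
\[
\frac{\partial s^*}{\partial\gamma}=\frac{\delta_k(1-\alpha)(r+\delta_k)}{D^2},\qquad
\frac{\partial s^*}{\partial r}=-\frac{\gamma\delta_k(1-\alpha)}{D^2}.
\]

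The only point needing care is that the comparative statics must stay within the parameter region where Theorem~\ref{thm:exist_unique_A} guarantees an interior steady state, so that Equation (\ref{eq:s_star}) is the valid characterization. Every local perturbation of $\gamma>0$ or $r>0$ remains in that open region. There is no real obstacle here; the argument is a routine sign check.
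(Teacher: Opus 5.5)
Your proposal is correct and matches the paper's approach. The paper gives no separate proof and simply reads both signs off the closed form in Equation~(\ref{eq:s_star}); your explicit derivatives $\partial s^*/\partial\gamma=\delta_k(1-\alpha)(r+\delta_k)/D^2>0$ and $\partial s^*/\partial r=-\gamma\delta_k(1-\alpha)/D^2<0$ are exactly the computation that assertion rests on.
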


\begin{proposition}[No last mile implies positive maintenance share]
\label{prop:delta_positive}
If $\delta_k>0$, then $s^*>0$. Moreover, $s^*$ is increasing in $\delta_k$:
\[
\frac{\partial s^*}{\partial \delta_k} > 0.
\]
\end{proposition}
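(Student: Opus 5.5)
The plan is to work directly from the closed form for $s^*$ given in Theorem~\ref{thm:exist_unique_A}. That theorem is stated for $\delta_k\in(0,1)$, which is the maintained assumption under which the steady state exists, so positivity and the derivative can both be read off the formula in equation~(\ref{eq:s_star}).

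\textbf{Positivity.} Write $N\equiv\gamma\delta_k$ and $D\equiv\gamma\delta_k+(1-\alpha)(r+\delta_k)$.
\begin{itemize}
\item Since $\gamma>0$ and $\delta_k>0$, we have $N>0$.
\item Since $\alpha\in(0,1)$, $r>0$ and $\delta_k>0$, the term $(1-\alpha)(r+\delta_k)$ is strictly positive, so $D>N>0$.
\end{itemize}
Hence $0<s^*<1$, and in particular $s^*>0$. It is worth noting that at $\delta_k=0$ the formula would give $s^*=0$. This is the sense in which positive depreciation is what sustains the maintenance share.

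\textbf{Monotonicity.} To make the sign transparent, rewrite the formula as
\[
s^*=\frac{1}{1+\dfrac{(1-\alpha)(r+\delta_k)}{\gamma\delta_k}}
=\frac{1}{1+\dfrac{1-\alpha}{\gamma}\left(\dfrac{r}{\delta_k}+1\right)}.
\]
\begin{itemize}
\item The inner term $r/\delta_k$ is strictly decreasing in $\delta_k$, because $r>0$.
\item The coefficient $(1-\alpha)/\gamma$ is positive, so the denominator is strictly decreasing in $\delta_k$.
\item Therefore $s^*$ is strictly increasing in $\delta_k$.
\end{itemize}

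For an explicit derivative, apply the quotient rule with $\partial N/\partial\delta_k=\gamma$ and $\partial D/\partial\delta_k=\gamma+(1-\alpha)$. This gives
\[
\frac{\partial s^*}{\partial\delta_k}
=\frac{\gamma D-\gamma\delta_k\bigl(\gamma+1-\alpha\bigr)}{D^2}
=\frac{\gamma(1-\alpha)r}{D^2}>0.
\]
The only step needing care is the cancellation in the numerator. The $\gamma^2\delta_k$ terms and the $\gamma(1-\alpha)\delta_k$ terms both cancel, leaving $\gamma(1-\alpha)r$, which is positive because $r>0$.
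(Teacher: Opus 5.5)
Your argument is correct and is essentially the paper's own: the paper gives no separate proof and simply reads Proposition~\ref{prop:delta_positive} off the closed form (\ref{eq:s_star}), exactly as you do. Your explicit derivative $\partial s^*/\partial\delta_k=\gamma(1-\alpha)r/\bigl(\gamma\delta_k+(1-\alpha)(r+\delta_k)\bigr)^2>0$ spells out the step the paper leaves implicit. One cosmetic caveat: the formula is derived using $k^*=\eta L_S^*/\delta_k$, so your $\delta_k=0$ remark is only a limiting statement, not a steady-state computation.
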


\paragraph{Interpretation: capability as codified, partially nonrival knowledge (and why this strengthens No Last Mile).}
Proposition \ref{prop:delta_positive} implies a strictly positive long-run maintenance share whenever effective capability depreciates. A complementary interpretation is that the capability stock $k_t$ is a form of codified organizational knowledge: structured human-data work converts tacit judgment into reusable artifacts (rubrics, evaluation harnesses, exception taxonomies, audit protocols) that can be applied repeatedly across many downstream interactions. This makes $k_t$ \emph{partially nonrival} within the firm or platform in the sense of the endogenous growth literature \citep{romer1990endogenous}: once created, the same codified procedure can improve reliability at scale.

Nonrivalry sharpens the conceptual point in two ways. First, it raises the \emph{social} return to investment in $k_t$ because improvements can be reused across a large volume of production, so the incentives to maintain capability do not vanish as adoption spreads. Second, it highlights a standard wedge: even if a platform internalizes some benefits of $k_t$, it need not internalize all spillovers (across task families, across firms using the same foundation model stack, or through workforce-level incentive misalignment), implying that the competitive/private allocation can \emph{underinvest} relative to the social optimum. Read this way, the baseline $s^\ast$ is best interpreted as a conservative equilibrium lower bound: nonrival scaling forces push toward \emph{more} sustained investment in structured human-data work, not less, and the No Last Mile conclusion survives even when the mechanism is framed as knowledge accumulation rather than “depreciation by assumption.”

\paragraph{Empirical content.}
Propositions \ref{prop:gamma_r}--\ref{prop:delta_positive} are falsifiable once one specifies empirical proxies for (i) the marginal productivity of capability ($\gamma$), (ii) effective depreciation/obsolescence ($\delta_k$), and (iii) the structured labor share $s$. The model predicts that environments with larger measured productivity gains per unit of structured human-data work (higher effective $\gamma$) and environments with faster drift (higher effective $\delta_k$) sustain a larger long-run allocation of labor to structured human-data production.

% ===========================================
% Section 3: Task Families and Reallocation
% ===========================================
\section{Full Model with Task Families, Task Automation, and Labor Reallocation}
\label{sec:model3}

Section \ref{sec:modelA} aggregates all structured work into a single stock $k_t$. This section provides the microfoundation for $k_t$ using a portfolio of task families and derives reallocation results that capture saturation, frontier expansion, and persistent maintenance demand.

\subsection{Task families and maturity stocks}

Let $\mathcal{J}_t$ denote the set of task families active at time $t$. Each task family represents a bundle of work activities available in the economy. These activities may be unstructured and offline (hence not directly observable in digital traces) or codified into explicit procedures and interfaces. We allow task families to vary in their codification intensity and in their exposure to AI (i.e., the extent to which the work can be automated, assisted, or monitored by AI systems).

Each task family $j\in\mathcal{J}_t$ has a codification (maturity) stock $k_{j,t}\ge 0$ evolving as
\begin{equation}
k_{j,t+1} = (1-\delta_j)\,k_{j,t} + g_j(\ell_{j,t}),
\qquad \delta\in(0,1),
\label{eq:kj_law}
\end{equation}
where $\ell_{j,t}\ge 0$ is structured labor allocated to family $j$ at time $t$ and
\begin{equation}
g:\mathbb{R}_+\to\mathbb{R}_+,\quad
g(0)=0,\quad g'(\cdot)>0,\quad g''(\cdot)<0.
\label{eq:g_assump}
\end{equation}
The concavity of $g$ formalizes diminishing returns to codification effort within a family.

Total structured labor satisfies
\begin{equation}
\sum_{j\in\mathcal{J}_t} \ell_{j,t} = L_{S,t}.
\label{eq:LS_sum}
\end{equation}

\subsection{Aggregate capability}

Aggregate capability $k_t$ in Equation (\ref{eq:production}) is an aggregator of the family-level stocks $\{k_{j,t}\}$.
We assume either addictive aggregation for simplicity or CES aggregation to capture task complementarity (more realistic).

\paragraph{Additive aggregator.}
\begin{equation}
k_t = \sum_{j\in\mathcal{J}_t} \omega_j\,k_{j,t},
\qquad \omega_j>0.
\label{eq:k_additive}
\end{equation}

\paragraph{CES aggregator}
\begin{equation}
k_t =
\left(\sum_{j\in\mathcal{J}_t} \omega_j\,k_{j,t}^{\rho}\right)^{1/\rho},
\qquad \omega_j>0,\ \rho\le 1,\ \rho\neq 0.
\label{eq:k_ces}
\end{equation}
Values $\rho<1$ allow complementarity across task families, capturing bottlenecks where weak links have outsized effects.

\subsection{Frontier expansion (entry of new task families)}

Innovation and changing environments introduce new task families. We model this by allowing $\mathcal{J}_t$ to expand:
\begin{equation}
\mathcal{J}_{t+1} = \mathcal{J}_t \cup \mathcal{N}_{t+1},
\label{eq:J_entry}
\end{equation}
where $\mathcal{N}_{t+1}$ is the set of new families arriving between $t$ and $t+1$. New families start with low maturity (for example $k_{j,t}\approx 0$ for $j\in\mathcal{N}_{t}$). The expected size of $\mathcal{N}_{t}$ is governed by an entry intensity $\mu_t\ge 0$.

\paragraph{Why entry intensity varies across environments.}
The entry intensity $\mu_t$ should be interpreted as the rate at which new AI-exposed task families become economically worth codifying, and it is naturally state dependent. 
\begin{itemize}
    \item First, supply-side capability shocks—such as releases of stronger foundation models, improved toolchains, or new modalities—lower the fixed costs of bringing new families on-platform, raising $\mu_t$ around major release windows.
    \item Second, the same shock induces more entry in settings with higher workflow digitization and modularity, because more tasks are observable, contractible, and integrable; hence digitally intensive sectors should exhibit higher baseline $\mu$ and larger responses to model improvements.
    \item Third, institutional and governance shocks (new policies, safety requirements, or compliance standards) can create entire new families of evaluation and auditing tasks, again spiking $\mu_t$.
\end{itemize}
  These mechanisms map to observable variation: task-family “births” on platforms can be related to dated model-release events, sectoral digital-intensity measures, and regulatory or policy shocks.

\subsection{Within-period allocation and reallocation}

To characterize reallocation, consider a reduced-form objective that chooses $\{\ell_{j,t}\}$ to maximize the incremental value of capability accumulation. Let $\Lambda_t>0$ denote the shadow value of aggregate capability at time $t$ (which in the baseline is linked to $\partial Y_t/\partial k_t$). Under the additive aggregator Equation (\ref{eq:k_additive}), maximizing the period-$t$ contribution to aggregate capability is equivalent to
\begin{equation}
\max_{\{\ell_{j,t}\ge 0\}}
\sum_{j\in\mathcal{J}_t} \Lambda_t\,\omega_j\, g(\ell_{j,t})
\quad \text{s.t.} \quad \sum_{j\in\mathcal{J}_t} \ell_{j,t}=L_{S,t}.
\label{eq:allocation_problem}
\end{equation}

\begin{lemma}[Marginal-equalization across task families]
\label{lem:allocation_rule}
Assume Equation (\ref{eq:g_assump}). In any interior optimum of Equation (\ref{eq:allocation_problem}), for all task families $i,j$ with $\ell_{i,t}>0$ and $\ell_{j,t}>0$,
\begin{equation}
\omega_i\, g'(\ell_{i,t}) = \omega_j\, g'(\ell_{j,t}).
\label{eq:marginal_equalization}
\end{equation}
\end{lemma}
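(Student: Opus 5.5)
The plan is a Lagrangian first-order-condition argument, with a direct perturbation argument as backup so that no constraint qualification is needed. Fix $t$ and take an optimum $\{\ell_{j,t}\}$ of Equation (\ref{eq:allocation_problem}). Form the Lagrangian
\[
\mathcal{L} = \sum_{j\in\mathcal{J}_t} \Lambda_t\,\omega_j\, g(\ell_{j,t}) - \mu\Bigl(\sum_{j\in\mathcal{J}_t}\ell_{j,t}-L_{S,t}\Bigr) + \sum_{j\in\mathcal{J}_t}\nu_j\,\ell_{j,t},
\]
with complementary slackness $\nu_j\ell_{j,t}=0$ and $\nu_j\ge 0$. The constraints are linear, so the KKT conditions are necessary at any optimum. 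Differentiating in $\ell_{j,t}$ gives
\[
\Lambda_t\omega_j g'(\ell_{j,t}) = \mu - \nu_j .
\]
For any family with $\ell_{j,t}>0$ we have $\nu_j=0$, so $\Lambda_t\omega_j g'(\ell_{j,t})=\mu$. Equating this across two such families $i,j$ and dividing by $\Lambda_t>0$ yields Equation (\ref{eq:marginal_equalization}).

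To avoid relying on KKT machinery, I would also give the elementary version. Suppose $\ell_{i,t},\ell_{j,t}>0$ but $\omega_i g'(\ell_{i,t})>\omega_j g'(\ell_{j,t})$. Move a small amount $\varepsilon>0$ of labor from $j$ to $i$; this is feasible because $\ell_{j,t}>0$ and it preserves the sum in Equation (\ref{eq:LS_sum}). The change in the objective is
\[
\Lambda_t\bigl[\omega_i g'(\ell_{i,t})-\omega_j g'(\ell_{j,t})\bigr]\varepsilon + o(\varepsilon),
\]
which is strictly positive for $\varepsilon$ small. This contradicts optimality, and the reverse inequality is ruled out symmetrically.

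The only delicate point is differentiability of $g$ at the relevant points and the role of positivity. Equation (\ref{eq:g_assump}) assumes $g'$ exists, so the first-order expansion is justified. Positivity of both allocations is what makes the two-sided perturbation feasible. I would also remark that concavity ($g''<0$) is not needed for this necessary condition. It is needed only for the converse, that is, for sufficiency and uniqueness of the allocation: with $g$ strictly concave the objective is strictly concave and the equalization conditions together with the resource constraint characterize the unique optimum.
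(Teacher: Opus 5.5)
Your proposal is correct and follows the paper's proof, which forms the Lagrangian, obtains $\Lambda_t\omega_j g'(\ell_{j,t})=\nu_t$ for every family with $\ell_{j,t}>0$, and cancels $\Lambda_t$. Your explicit nonnegativity multipliers and the $\varepsilon$-reallocation backup make precise two points the paper leaves implicit: positivity of $\ell_{j,t}$ is what removes the corner multipliers, and only differentiability of $g$ (not concavity) is used.
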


\begin{proof}
Form the Lagrangian for Equation (\ref{eq:allocation_problem}). The first-order conditions imply
$\Lambda_t\,\omega_j\,g'(\ell_{j,t})=\nu_t$ for all $j$ with $\ell_{j,t}>0$, where $\nu_t$ is the multiplier on Equation (\ref{eq:LS_sum}). Cancel $\Lambda_t$ to obtain Equation (\ref{eq:marginal_equalization}).
\end{proof}

Lemma \ref{lem:allocation_rule} implies systematic reallocation toward families with higher weights $\omega_j$ and toward families where marginal codification returns remain high, which in turn depends on maturity and how $\omega_j$ and effective drift load onto the aggregator.

\subsection{Roy-style Wage dispersion across task families} \label{subsection:roy}

To microfound dispersion within structured work, let workers $i$ differ in comparative advantage across task families, summarized by productivity shifters $\{a_{ij}\}_{j\in J_t}$ (domain knowledge, calibrated judgment, writing skill, etc.). Suppose one unit of worker time in family $j$ produces “effective structured effort” $e_{ij,t}=a_{ij}$ that is valued at a family-specific piece rate $p_{j,t}$ determined by the marginal value of improving maturity in that family. Under the additive capability aggregator, a natural object is the family shadow price
\[
p_{j,t} \equiv \Lambda_t \omega_j g'(\ell_{j,t}),
\]
so that an individual’s wage if assigned to family $j$ is
\[
w_{ij,t}=p_{j,t}a_{ij}.
\]
In a Roy-style assignment equilibrium, workers select into the family that maximizes $w_{ij,t}$, implying systematic wage differences across families because both (i) prices $p_{j,t}$ vary with bottlenecks, maturity, and frontier status, and (ii) the distribution of $a_{ij}$ differs across families. High-judgment families command premia because they carry high marginal value (high $\omega_j$, low $k_{j,t}$, or complementarity-driven bottlenecks under CES aggregation) while drawing from a thin right tail of $a_{ij}$. By contrast, standardized families face thick global supply and are priced closer to the outside option for general labor, producing a compressed wage distribution. 

This sorting mechanism connects the labor-economics notion of comparative advantage directly to the macro object $L_{S,t}$: the model predicts both a positive structured labor share in the aggregate and a right-skewed within-structured wage distribution driven by complex, frontier and bottleneck families.

\subsection{Falsifiable propositions from task families}

\begin{proposition}[Reallocation toward low-maturity/frontier families]
\label{prop:reallocation_frontier}
Suppose (i) $g$ is concave and (ii) new families arrive with low maturity and high marginal codification value. Then following an increase in frontier entry intensity $\mu_t$ (larger $\mathcal{N}_{t+1}$), the optimal allocation shifts structured labor toward newly arrived and low-maturity task families, holding $L_{S,t}$ fixed.
\end{proposition}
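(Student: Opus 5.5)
The argument is a comparative-statics exercise on the allocation problem (\ref{eq:allocation_problem}), with the set of active families enlarged by entry. We use the marginal-equalization condition of Lemma~\ref{lem:allocation_rule} together with the multiplier $\nu_t$ on the resource constraint (\ref{eq:LS_sum}). To make ``low maturity and high marginal codification value'' precise, we read the allocation problem as the period-$t$ problem with a maturity-dependent marginal value. One natural version lets family $j$'s marginal value be $\omega_j(k_{j,t})\,g'(\ell_{j,t})$, with $\omega_j(\cdot)$ nonincreasing in maturity; under the CES aggregator (\ref{eq:k_ces}) with $\rho<1$ this holds automatically, since $\partial k_t/\partial k_{j,t}\propto \omega_j k_{j,t}^{\rho-1}$. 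Assumption (ii) then says that entrants $j\in\mathcal{N}_{t+1}$ have $\omega_j(k_{j})\,g'(0)$ strictly above the equilibrium multiplier $\nu_t/\Lambda_t$ that prevails before entry. For simplicity we take $g'(0)$ large enough, or impose Inada, so that entrants receive positive labor.

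\textbf{Step 1: the solution as a function of the multiplier.} Write the solution in closed form. Since $g'$ is strictly decreasing, it is invertible. Define $h=(g')^{-1}$. Then for each family
\[
\ell_j(\lambda)=\max\{0,\,h(\lambda/\omega_j)\},
\]
where $\lambda=\nu_t/\Lambda_t$ is pinned down by $\sum_{j\in\mathcal{J}}\ell_j(\lambda)=L_{S,t}$. Each $\ell_j(\lambda)$ is nonincreasing in $\lambda$, and strictly decreasing wherever it is positive. So the aggregate demand $D(\lambda)=\sum_j\ell_j(\lambda)$ is nonincreasing, and the equilibrium $\lambda$ is unique.

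\textbf{Step 2: effect of a larger entrant set.} An increase in $\mu_t$ enlarges $\mathcal{N}_{t+1}$ to some $\mathcal{N}'\supseteq\mathcal{N}$. The new demand $D'(\lambda)=D(\lambda)+\sum_{j\in\mathcal{N}'\setminus\mathcal{N}}\ell_j(\lambda)$ is weakly above $D$, and strictly above at the old $\lambda^*$, because by (ii) each added entrant demands positive labor at $\lambda^*$. With $L_{S,t}$ held fixed, $D'(\lambda')=L_{S,t}$ therefore forces $\lambda'>\lambda^*$. It follows that every incumbent's allocation $h(\lambda'/\omega_j)$ weakly falls, and strictly falls if it was interior. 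The labor freed in this way is exactly the amount absorbed by the new families, so the share of structured labor in newly arrived families rises.

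\textbf{Step 3: low-maturity incumbents.} To cover the ``low-maturity'' part of the claim, compare incumbents. Concavity of $g$ and the dependence of $\omega_j(k_{j,t})$ on maturity imply that families with lower $k_{j,t}$ have higher effective weight, and hence larger $\ell_j$ at any given $\lambda$. The proportional adjustment to a rise in $\lambda$ is governed by the elasticity of $h$, so a general ranking of which incumbents lose more labor requires a regularity condition. I would impose a constant-elasticity $g$, for example $g(\ell)=\ell^{\beta}$. Then $\ell_j\propto\omega_j^{1/(1-\beta)}$, and all interior incumbents shrink by the same proportion. Consequently the composition of structured labor shifts toward the low-maturity set, which here means the entrants.

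\textbf{Main obstacle.} The hardest step is Step 3, and more generally pinning down (ii) so that the statement is not vacuous. Without an assumption like constant elasticity, the ranking of reallocation among incumbents can fail. For that reason I would state the result cleanly for entrants versus incumbents, which needs only monotonicity from Step 2, and treat the finer ranking across incumbents under the isoelastic specification.
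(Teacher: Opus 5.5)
The paper states this proposition without proof. Its only support is the remark after Lemma~\ref{lem:allocation_rule} that marginal equalization tilts labor toward high-$\omega_j$ families and families with high remaining marginal returns. Your Steps 1--2 turn that remark into an actual comparative-statics argument: aggregate demand $D(\lambda)$ is nonincreasing, entry raises the multiplier, every incumbent weakly contracts, and the freed labor goes to the enlarged entrant set. For the entrants-versus-incumbents reading this is correct, and it is more than the paper provides. It also exposes two things the paper glosses over. In the additive problem (\ref{eq:allocation_problem}), $k_{j,t}$ does not enter marginal values at all. And any ranking among incumbents needs extra structure. One more caveat: the conclusion concerns the set $\mathcal{N}'$ as a whole, since each family in the original $\mathcal{N}$ individually loses labor.

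\textbf{The condition in (ii).} Step 2 needs $\omega_j g'(0)>\lambda^*(\mathcal{N})$ for the added entrants, where $\lambda^*(\mathcal{N})$ is the multiplier with the smaller entrant set. That multiplier is weakly above the pre-entry one. So (ii), as you phrase it, does not deliver strictness without Inada.

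\textbf{The CES justification.} You evaluate the weight at beginning-of-period maturity, $\omega_j k_{j,t}^{\rho-1}$. This is infinite precisely for entrants with $k_{j,t}\approx 0$, so $h(\lambda/\omega_j)$ degenerates for exactly the families the proposition is about. Work with the exact objective instead. With $x_j=(1-\delta_j)k_{j,t}+g(\ell_j)$, the quantity $\frac{1}{\rho}\sum_j\omega_j x_j^{\rho}=k_{t+1}^{\rho}/\rho$ is increasing in $k_{t+1}$ for every $\rho\neq 0$. The period problem is therefore separable and strictly concave, with marginal value $F_j'(\ell)=\omega_j x_j^{\rho-1}g'(\ell)$, which is strictly decreasing in both $\ell$ and $k_{j,t}$. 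Your Steps 1--2 then go through verbatim with $(F_j')^{-1}$ in place of $h(\cdot/\omega_j)$. Moreover, $F_j'(0^+)=\infty$ when $k_{j,t}=0$ and $\rho<1$, so (ii) becomes a consequence of low maturity rather than an extra assumption.

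\textbf{Your Step 3 obstacle.} The exact objective also dissolves it. With $g(\ell)=\ell^{\beta}$, one gets $d\ln\ell_j/d\ln\lambda=-1/E_j$, where $E_j=(1-\rho)\beta s_j+(1-\beta)$ and $s_j=g(\ell_j)/x_j$. Take two families differing only in $k_{j,t}$. At every $\lambda$, the less mature one has the larger $s_j$, so it loses proportionally less labor when entry raises $\lambda$. The uniform proportional shrinkage you obtained is an artifact of freezing the weights, which shuts down exactly the maturity channel the proposition is about.
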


\begin{proposition}[Saturation reduces labor required per incremental improvement]
\label{prop:saturation}
Within any task family $j$, diminishing returns $g''<0$ imply that the marginal capability gain per unit of structured labor, $\partial k_{j,t+1}/\partial \ell_{j,t}=g'(\ell_{j,t})$, declines as $\ell_{j,t}$ rises. Consequently, as a family becomes mature (high $k_{j,t}$ and stable standards), maintaining it requires less incremental labor per unit of capability improvement, and structured effort reallocates to other families with higher marginal gains.
\end{proposition}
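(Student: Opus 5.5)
The first claim of Proposition \ref{prop:saturation} is immediate. By Equation (\ref{eq:kj_law}), $\partial k_{j,t+1}/\partial \ell_{j,t} = g'(\ell_{j,t})$, because the term $(1-\delta_j)k_{j,t}$ does not depend on $\ell_{j,t}$. Since $g''<0$ by Equation (\ref{eq:g_assump}), the map $\ell \mapsto g'(\ell)$ is strictly decreasing, so the marginal gain falls as $\ell_{j,t}$ rises. Equivalently, the labor needed per unit of capability improvement, $1/g'(\ell_{j,t})$, is strictly increasing in $\ell_{j,t}$.

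The second claim concerns maturity and reallocation. Here I will work with the reduced-form problem (\ref{eq:allocation_problem}) and Lemma \ref{lem:allocation_rule}. The difficulty is that in (\ref{eq:allocation_problem}) the stock $k_{j,t}$ does not enter the marginal value of $\ell_{j,t}$ under the additive aggregator. ``Maturity'' therefore has to enter through a modeling channel. I will make this explicit in one of two ways.

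\emph{Option (a): the CES aggregator (\ref{eq:k_ces}).} The family shadow price becomes
\[
\Lambda_t\,\omega_j\,k_t^{1-\rho}\,k_{j,t+1}^{\rho-1}\,g'(\ell_{j,t}).
\]
For $\rho<1$ this price is decreasing in $k_{j,t+1}$, and hence in $k_{j,t}$.

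\emph{Option (b): reduced-form dependence.} Alternatively, read ``mature, stable standards'' as a family whose effective weight $\omega_j$ or effective drift has fallen. Either way, mature families face a lower marginal value per unit of labor.

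Under either option the allocation argument runs as follows. Take two families $i$ and $j$ with interior allocations, so the first-order condition $\text{price}_j(\ell_j;k_j)=\nu_t$ holds for each. Use the implicit function theorem. Since $\partial\,\text{price}_j/\partial \ell_j<0$ (from $g''<0$, and under CES also from $\rho-1<0$) and $\partial\,\text{price}_j/\partial k_{j,t}<0$, we obtain $\partial \ell_j/\partial k_{j,t}<0$ holding $\nu_t$ fixed. The resource constraint (\ref{eq:LS_sum}) with $L_{S,t}$ fixed then forces the freed labor to go to other families. Concretely, $\nu_t$ falls weakly, and each $\ell_i$ for $i\neq j$ rises, because each $\ell_i$ is decreasing in $\nu_t$. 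This is exactly ``structured effort reallocates to families with higher marginal gains.''

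The maintenance interpretation follows from the steady-state condition $\delta_j k_j = g(\ell_j)$. Given concave $g$, a mature family held at fixed $k_j$ needs $\ell_j=g^{-1}(\delta_j k_j)$, so its required labor is determined by $\delta_j$ alone. The main obstacle is this modeling step: making ``maturity'' actually enter the first-order conditions. I would present the CES version as the primary argument and note that, under pure additivity, the claim holds only in the literal $g'$-decreasing sense.
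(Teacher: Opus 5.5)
Your first claim is correct and matches the paper's reasoning. The paper gives no separate proof: its argument is the statement's first sentence plus an informal appeal to Lemma~\ref{lem:allocation_rule}, in which a high $\ell_{j,t}$ silently stands in for high maturity $k_{j,t}$. You correctly observe that $k_{j,t}$ never enters the additive problem (\ref{eq:allocation_problem}). Routing maturity through the CES aggregator (\ref{eq:k_ces}) is therefore a genuinely different and more substantive route, because it yields an actual comparative static. It needs one repair. The first-order condition is $\Lambda\,\omega_j\,k_{t+1}^{1-\rho}k_{j,t+1}^{\rho-1}g'(\ell_{j,t})=\nu_t$, with $k_{t+1}$ rather than $k_t$. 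The common factor $k_{t+1}^{1-\rho}$ depends on every $\ell_{i,t}$, so the family conditions are not separable, and neither ``holding $\nu_t$ fixed'' nor ``each $\ell_i$ is decreasing in $\nu_t$'' is literally valid. Absorb the factor into the multiplier by setting $\tilde\nu_t=\nu_t/(\Lambda k_{t+1}^{1-\rho})$. The conditions $\omega_j k_{j,t+1}^{\rho-1}g'(\ell_{j,t})=\tilde\nu_t$ then have left sides strictly decreasing in own $\ell_{j,t}$ and, for $\rho<1$, in $k_{j,t}$, and your resource-constraint argument goes through with $\tilde\nu_t$ in place of $\nu_t$. Concavity of the CES aggregator for $\rho\le 1$ makes these conditions sufficient. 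Mature families may sit at the corner $\ell_{j,t}=0$, where the conclusion holds only weakly.

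The genuine gap is the middle clause, that a mature family ``requires less incremental labor per unit of capability improvement.'' You never prove it, and your own first paragraph points the other way, since $1/g'(\ell_{j,t})$ is increasing. Your maintenance paragraph does not close the gap. The stationary requirement $\ell_j=g^{-1}(\delta_j k_j)$ is not determined by $\delta_j$ alone. Because $g^{-1}$ is increasing and $g'$ is decreasing, both $\ell_j$ and the marginal labor cost per unit of capability, $1/g'\bigl(g^{-1}(\delta_j k_j)\bigr)$, rise with $k_j$. At fixed drift, a more mature family therefore needs \emph{more} maintenance labor per unit, not less. The clause survives only as a statement about lower $\delta_j$ (``stable standards''), or when reread as ``the optimal $\ell_{j,t}$ falls,'' which is what your CES argument actually delivers. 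State which reading you prove, and note that the literal one fails under $g''<0$.

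Likewise, drop the drift half of Option (b). Drift does not enter (\ref{eq:allocation_problem}) at all. In any dynamic version, the value of a unit of $k_j$ scales as $\omega_j/(r+\delta_j)$, exactly as in (\ref{eq:shadow_value}). Lower drift therefore \emph{raises} the marginal value of labor in that family and pulls labor toward it. Only the ``lower $\omega_j$'' reading works, and that is an assumption, not a derivation.
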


\begin{proposition}[Maintenance demand with drift]
\label{prop:maintenance}
If $\delta>0$, then maintaining a stationary distribution of maturity across task families requires a strictly positive long-run flow of structured labor allocated to existing families, even absent entry ($\mu_t=0$). With sustained entry ($\mu_t>0$), the long-run structured labor requirement is bounded away from zero by both maintenance and frontier investment needs.
\end{proposition}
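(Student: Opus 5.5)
The plan is to work directly from the family-level law of motion (\ref{eq:kj_law}) and the properties of $g$ in (\ref{eq:g_assump}). First I fix what ``stationary distribution of maturity'' means. Absent entry ($\mu_t=0$), $\mathcal{J}_t=\mathcal{J}$ is a fixed finite set. Each family is held at a constant target maturity $k_j^\ast\ge 0$, with $k_j^\ast>0$ for at least one $j$; otherwise aggregate capability $k_t$ in (\ref{eq:k_additive}) or (\ref{eq:k_ces}) would be zero, and $A(k)$ would be degenerate. Imposing $k_{j,t+1}=k_{j,t}=k_j^\ast$ in (\ref{eq:kj_law}) gives the family-level analogue of (\ref{eq:k_star}):
\[
g(\ell_j^\ast)=\delta_j k_j^\ast .
\]
Since $g$ is strictly increasing with $g(0)=0$, $g$ is a bijection onto its range. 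So $\ell_j^\ast=g^{-1}(\delta_j k_j^\ast)$ is uniquely determined, and it is strictly positive whenever $\delta_j>0$ and $k_j^\ast>0$. Summing over $j$ through (\ref{eq:LS_sum}) gives $L_S^\ast=\sum_j g^{-1}(\delta_j k_j^\ast)>0$, which is the first claim. If ``stationary distribution'' is read stochastically, I would apply the same argument to expectations. Taking $\mathbb{E}$ of (\ref{eq:kj_law}) gives $\mathbb{E}g(\ell_j)=\delta_j\mathbb{E}k_j>0$, and since $g(0)=0$ this forces $\ell_j>0$ with positive probability.

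Second, I would show that the requirement is not merely positive but bounded away from zero. Concavity of $g$ with $g(0)=0$ gives $g(\ell)\le g'(0)\ell$, assuming $g'(0)<\infty$. Then
\[
\ell_j^\ast\ge \frac{\delta_j k_j^\ast}{g'(0)},
\]
which is the exact analogue of the baseline $L_S^\ast=\delta_k k^\ast/\eta$. If $g'(0)=\infty$, the inverse still gives a strictly positive level for each fixed target, so I would state the bound in terms of $g^{-1}$ rather than this linear bound.

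Third, with sustained entry ($\mu_t>0$), the total becomes $L_S$ equal to maintenance plus frontier investment. Maintenance is $\sum_{j\in\mathcal{J}_t}g^{-1}(\delta_j k_j^\ast)$ summed over incumbent families. Frontier investment is the labor needed to raise each new family from $k\approx 0$ toward its target: by (\ref{eq:kj_law}), reaching $k_j^\ast$ within $T$ periods requires cumulative labor of at least $k_j^\ast/g'(0)$, again using the concavity bound. With an expected $\mu$ new families per period, the long-run average requirement is at least $\mu\,\underline{k}/g'(0)$, where $\underline{k}$ is a lower bound on target maturities. Both terms are nonnegative, and each is separately bounded away from zero. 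This gives the second claim.

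The main obstacle is the lack of precision in the statement itself. I must impose that target maturities are bounded below ($\inf_j k_j^\ast>0$) and that $\inf_j\delta_j>0$; otherwise a sequence of families with vanishing targets or vanishing drift could drive the requirement to zero. I would state these as maintained assumptions, consistent with the paper's $\delta\in(0,1)$. I would also note that with unbounded entry and stationary targets, $|\mathcal{J}_t|$ grows, so the maintenance term grows rather than merely staying positive, which strengthens the conclusion. Optimality via Lemma~\ref{lem:allocation_rule} is not needed, because the result is a feasibility requirement that holds for any allocation sustaining the stationary distribution.
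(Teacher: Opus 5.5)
The paper gives no proof of this proposition; its only support is the steady-state accounting behind (\ref{eq:k_star}) and the informal ``Link back'' paragraph. Your no-entry argument, $\ell_j^\ast=g^{-1}(\delta_jk_j^\ast)>0$ summed through (\ref{eq:LS_sum}), is the family-level version of that accounting and is correct. You are also right that a nontrivial target ($k_j^\ast>0$ for some $j$) must be added to the hypotheses, and that no appeal to Lemma~\ref{lem:allocation_rule} is needed. If ``stationary distribution across families'' is read cross-sectionally, so that individual families may move within a stationary multiset of maturities, sum (\ref{eq:kj_law}) over $j$ instead: constancy of $\sum_j k_{j,t}$ gives $\sum_j g(\ell_{j,t})=\sum_j\delta_jk_{j,t}>0$ in every period. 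Note too that $g^{-1}(\delta_jk_j^\ast)$ requires $\delta_jk_j^\ast<\sup g$, and feasibility requires the sum to be at most $\bar L$.

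The sustained-entry part, however, has a genuine gap. Under (\ref{eq:J_entry}) families never exit, so $\mu>0$ implies $|\mathcal{J}_t|\to\infty$. Your own bound then makes maintenance at least $|\mathcal{J}_t|\,g^{-1}(\underline{\delta}\,\underline{k})$, which exceeds $\bar L$ in finite time, while (\ref{eq:labor_feas}) forces $L_{S,t}\le\bar L$. Your maintained assumptions ($\inf_jk_j^\ast\ge\underline{k}>0$, $\inf_j\delta_j\ge\underline{\delta}>0$, sustained $\mu>0$, no exit) are therefore jointly inconsistent. The stationary configuration you bound from below does not exist. The unbounded growth of the maintenance term is not a strengthening; it signals that the claim is vacuous as formalized.

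You can see this directly. Summing (\ref{eq:kj_law}) over a long horizon and applying Jensen shows that any family kept above $\underline{k}$ forever needs time-average labor of at least $g^{-1}(\underline{\delta}\,\underline{k})$, up to a vanishing term. So only finitely many families can be held above any fixed positive level, which contradicts $\inf_jk_j^\ast>0$ once $|\mathcal{J}_t|$ is unbounded. The defect is inherited from the paper's entry equation, but your proof has to confront it.

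The missing ingredient is an exit or obsolescence margin, or weights $\omega_j$ that decay with age so that targets for old families decline. For example, each family could become irrelevant with hazard $\chi>0$, so the expected number of active families settles at $\mu/\chi$. With such a margin, your two bounds do combine into a finite, strictly positive long-run requirement: maintenance on roughly $\mu/\chi$ active families plus frontier investment of order $\mu\,\underline{k}/g'(0)$ per period. Requiring this to be at most $\bar L$ is then the existence condition for the stationary state you describe.
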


\begin{proposition}[Bottlenecks and persistent premia under complementarity]
\label{prop:bottlenecks}
Under the CES aggregator \ref{eq:k_ces} with $\rho<1$, task families act as complements. In this case, families with low maturity can become bottlenecks with high shadow value, implying persistent incentives to allocate structured labor to scarce frontier families even as other families saturate.
\end{proposition}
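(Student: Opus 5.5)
The plan is to work directly with the CES aggregator (\ref{eq:k_ces}) and mirror the reduced-form allocation problem (\ref{eq:allocation_problem}). The payoff is now $\Lambda_t k_{t+1}$ with $k_{t+1}=\bigl(\sum_j \omega_j k_{j,t+1}^{\rho}\bigr)^{1/\rho}$ and $k_{j,t+1}=(1-\delta_j)k_{j,t}+g(\ell_{j,t})$. The first step is to compute the family shadow value
\[
\frac{\partial k_{t+1}}{\partial k_{j,t+1}} = \omega_j \left(\frac{k_{t+1}}{k_{j,t+1}}\right)^{1-\rho},
\]
and then form the Lagrangian exactly as in the proof of Lemma \ref{lem:allocation_rule}. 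This gives the CES analogue of (\ref{eq:marginal_equalization}):
\[
\omega_i \left(\frac{k_{t+1}}{k_{i,t+1}}\right)^{1-\rho} g'(\ell_{i,t}) = \omega_j \left(\frac{k_{t+1}}{k_{j,t+1}}\right)^{1-\rho} g'(\ell_{j,t})
\]
for interior families. From this I would define the CES piece rate $p_{j,t}\equiv \Lambda_t\,\omega_j (k_{t+1}/k_{j,t+1})^{1-\rho} g'(\ell_{j,t})$, in parallel with Subsection \ref{subsection:roy}.

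\textbf{Bottleneck claim.} Since $\rho<1$, the exponent $1-\rho$ is positive. The shadow value of family $j$ is therefore strictly decreasing in $k_{j,t+1}$, and it diverges as $k_{j,t+1}\to 0$ whenever $k_{t+1}$ is bounded away from zero. This divergence holds when $\rho\le 0$ for any positive stocks; when $0<\rho<1$ it holds provided some other families have positive maturity. So a newly arrived family with $k_{j,t}\approx 0$ carries an arbitrarily large shadow value relative to mature families. Comparing the ratio of shadow values of a low-maturity family $i$ and a mature family $j$ gives $(\omega_i/\omega_j)(k_{j,t+1}/k_{i,t+1})^{1-\rho}$, which exceeds one once the maturity gap is large enough. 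This establishes that low-maturity families become bottlenecks. It also contrasts with the additive case $\rho=1$, where the ratio reduces to $\omega_i/\omega_j$ and maturity does not matter.

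\textbf{Persistence claim.} I would combine the bottleneck step with the KKT conditions. If a low-maturity family received $\ell_{i,t}=0$, its marginal value $\Lambda_t\omega_i(k_{t+1}/k_{i,t+1})^{1-\rho}g'(0)$ would exceed the multiplier $\nu_t$, which is pinned down by the saturating families. That contradicts optimality, so such a family receives strictly positive labor. To show this is persistent rather than one-off, I would use drift $\delta_j>0$ and the entry of new families (\ref{eq:J_entry}). Each period, depreciating or newly entered families again have low $k_{j,t+1}$, so the bottleneck inequality recurs. Meanwhile, mature families face both declining $g'$ (Proposition \ref{prop:saturation}) and a declining CES weight $(k_{t+1}/k_{j,t+1})^{1-\rho}$. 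The premia $p_{j,t}$ in Subsection \ref{subsection:roy} therefore stay elevated for scarce families.

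\textbf{Main obstacle.} The hardest step is turning ``can become bottlenecks'' into a clean statement. Two issues arise. First, the objective is concave in $\{\ell_j\}$ only for $\rho\le 1$, because the CES aggregator is concave and increasing and $g$ is concave. I would verify this first, so that the first-order conditions characterize the optimum. Second, $k_{t+1}$ in the shadow value itself depends on the allocation. I would handle this by comparing families through the ratio of shadow values, where $k_{t+1}$ cancels. I would also state the result as a comparative statement, namely that the ratio of shadow values rises without bound as relative maturity falls, rather than as an absolute level.
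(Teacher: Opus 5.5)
The paper states Proposition~\ref{prop:bottlenecks} without proof. Your route is the natural formalization: the CES marginal product $\omega_j(k_{t+1}/k_{j,t+1})^{1-\rho}$, the Lagrangian analogue of Lemma~\ref{lem:allocation_rule}, and comparison through the ratio $(\omega_i/\omega_j)(k_{j,t+1}/k_{i,t+1})^{1-\rho}$, in which $k_{t+1}$ cancels. The ratio step is correct for every $\rho<1$.

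\textbf{The divergence claim is backwards.} You assert that the shadow value diverges for $\rho\le 0$ ``for any positive stocks.'' This treats $k_{t+1}$ as fixed while $k_{j,t+1}$ moves. Writing $R=\sum_{i\neq j}\omega_i k_{i,t+1}^{\rho}$,
\[
\frac{\partial k_{t+1}}{\partial k_{j,t+1}}=\omega_j\bigl(\omega_j+R\,k_{j,t+1}^{-\rho}\bigr)^{(1-\rho)/\rho}.
\]
This diverges as $k_{j,t+1}\to 0$ when $0<\rho<1$ and $R>0$. For $\rho<0$, however, it rises only to the finite limit $\omega_j^{1/\rho}$. For example, with $\rho=-1$ and unit weights, $k=k_1k_2/(k_1+k_2)$ and $\partial k/\partial k_1=k_2^2/(k_1+k_2)^2\to 1$. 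Your proviso that $k_{t+1}$ stay bounded away from zero fails exactly there, because under strong complementarity the aggregate vanishes with its weakest input. The ratio still explodes, but because mature families' shadow values collapse, not because the scarce family's blows up. Monotonicity does survive: the derivative of the shadow value in $k_{j,t+1}$ is a positive multiple of $-\sum_{i\neq j}\omega_i k_{i,t+1}^{\rho}$.

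So state the bottleneck result only in ratio form. This also removes the endogeneity you flag. If $\ell_{i,t}=0$, then $k_{i,t+1}=(1-\delta_i)k_{i,t}$, while any interior $j$ has $k_{j,t+1}>(1-\delta_j)k_{j,t}$ and $g'(\ell_{j,t})<g'(0)$. Hence, given $L_{S,t}>0$, the KKT conditions force $\ell_{i,t}>0$ whenever $i$ maximizes $\omega_i\bigl((1-\delta_i)k_{i,t}\bigr)^{\rho-1}$ over $\mathcal{J}_t$. That is a clean, primitive version of ``can become a bottleneck.''

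\textbf{The persistence step does not go through as written.} There are three problems.

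\emph{Drift alone does not regenerate bottlenecks.} A common drift rate rescales all inherited stocks by the same factor. It therefore leaves relative maturity, and hence the bottleneck ratio, unchanged before new investment. Absent entry, a stationary profile $k_j^*=g(\ell_j^*)/\delta_j$ has strictly positive stocks and bounded relative maturity. Heterogeneous $\delta_j$ only tilts that stationary allocation toward high-drift families.

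\emph{Proposition~\ref{prop:saturation} does not help.} There $g'$ falls in current labor $\ell_{j,t}$, not in the stock $k_{j,t}$. Since $g$ does not depend on maturity, mature families do not face a lower $g'$. Maturity enters only through the CES weight, which is exactly your own observation that maturity is irrelevant at $\rho=1$.

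\emph{Persistence has to come from recurring entry.} In each period in which newly arrived families with sufficiently small $k_j$ are present, the sufficient condition above holds for the entrant. For $\rho<0$, an unstaffed entrant at exactly zero would even drive $k_{t+1}$ to zero.

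\textbf{The premia claim contradicts your own first-order condition.} You assert that the premia $p_{j,t}$ stay elevated for scarce families. But your first-order condition gives $p_{j,t}=\nu_t$ for every interior family, and $p_{j,t}\le\nu_t$ at a corner. With homogeneous labor, the bottleneck shows up in quantities (a larger $\ell_{i,t}$ and a lower $g'(\ell_{i,t})$), not in piece rates. A persistent rate premium needs the Roy heterogeneity in $a_{ij}$, i.e.\ imperfect substitutability of effective labor across families, built into the allocation problem itself rather than read off this condition.
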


\begin{proposition}[Within-structured wage dispersion]
\label{prop:wagedispersion}Under the Roy-style assignment described in Section \ref{subsection:roy} (wage dispersion mechanism), equilibrium wages within structured work are dispersed because workers self-select into task families based on comparative advantage. In particular, families with high shadow value (high $\Lambda_t\omega_j$ and/or bottleneck status under complementarity) pay higher equilibrium rates, and dispersion is amplified when high-judgment families draw on a thin right tail of suitable labor.
\end{proposition}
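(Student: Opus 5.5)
The plan is to formalize the Roy-style assignment of Section \ref{subsection:roy} as a two-step argument. First, take the family prices $p_{j,t}=\Lambda_t\omega_j g'(\ell_{j,t})$ as given and characterize individual choices. Second, close the model by requiring that the aggregate effective effort supplied to each family be consistent with those prices.

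\textbf{Step 1: individual choice.} Each worker $i$ chooses $j^*(i)\in\arg\max_j p_{j,t}a_{ij}$. The equilibrium wage is therefore $w_{i,t}=\max_j p_{j,t}a_{ij}$. This is an upper envelope of linear functions of $(a_{ij})_j$, so it is nondegenerate whenever the $a_{ij}$ are heterogeneous. Dispersion follows at once: if the vector $a_{i\cdot}$ has a nondegenerate distribution across workers, then $w_{i,t}$ does too. The only exception is a knife-edge case in which prices exactly offset productivity differences, and we rule it out by assuming $a_{ij}$ has a continuous distribution with positive variance in some family.

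\textbf{Step 2: closing the model.} Let $\ell_{j,t}$ be effective effort, $\ell_{j,t}=\sum_{i:j^*(i)=j}a_{ij}$. Since $g''<0$, the price $p_{j,t}$ is continuous and strictly decreasing in $\ell_{j,t}$. Existence of an equilibrium price vector then follows from a standard fixed-point argument on the simplex, or equivalently from the concave planning problem \eqref{eq:allocation_problem} generalized to effective units. Lemma \ref{lem:allocation_rule} characterizes the interior solution, and the prices are the multipliers.

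\textbf{Step 3: ranking the families.} This is where the comparative statics live. Suppose family $h$ has higher $\Lambda_t\omega_h$, or is a bottleneck. Under CES with $\rho<1$, $\partial k_t/\partial k_{h,t}=\omega_h (k_{h,t}/k_t)^{\rho-1}$ is large when $k_{h,t}$ is low. That family then has a higher price at any given $\ell$. Equilibrium restores balance only by drawing more effort, and with a thin supply of high-$a_{ih}$ workers it cannot draw enough. I would show this with a monotone comparison. Fix the other prices and suppose the mass of workers with $a_{ih}$ above a threshold is small, for example Pareto tails with a large index. Then the supply of effective effort to $h$, viewed as a function of $p_h$, is inelastic. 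Hence the market-clearing $p_h$ is high, and the wages $p_h a_{ih}$ of those who select into $h$ lie in the upper tail of the wage distribution. For the standardized families, a thick supply with $a_{ij}$ roughly constant across many workers makes supply highly elastic. This pins $p_j a_{ij}$ to the common outside option and compresses wages within those families.

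\textbf{Main obstacle.} The hardest step is turning ``dispersion is amplified by a thin right tail'' into a precise statement. My first attempt would use a two-family example: a standardized family with $a_{i1}\equiv 1$ and a judgment family with $a_{i2}$ drawn from a Pareto distribution with index $\beta$. I would show that the equilibrium variance, or the ratio of the top quantile of wages to the median wage, is increasing in $\beta$-thinness, i.e.\ decreasing in tail mass, holding $\Lambda_t\omega_2$ fixed. The argument is an implicit-function computation on the single market-clearing condition for $p_2/p_1$.
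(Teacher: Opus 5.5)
The paper gives no formal proof of this proposition. It relies only on the verbal argument of Section~\ref{subsection:roy}: set $p_{j,t}=\Lambda_t\omega_j g'(\ell_{j,t})$, let workers pick $\arg\max_j p_{j,t}a_{ij}$, and assert premia from high marginal value plus thin supply. Your Steps 1--2 therefore already go further than the paper, but two repairs are needed there. First, nondegeneracy of $\max_j p_{j,t}a_{ij}$ needs a joint density for $(a_{ij})_j$, not a nondegenerate marginal in one family, because the skill vector can be spread along a level set of the max. Second, and more important, Lemma~\ref{lem:allocation_rule} does not characterize the effective-units problem, and if it did it would destroy Step 3. Multiplying its condition by $\Lambda_t$ gives $p_{i,t}=p_{j,t}$ for every pair of active families, so no family could carry a higher price. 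With heterogeneous $a_{ij}$ the feasible set of effective efforts is not a simplex. The right optimality condition is the assignment condition itself, under which only the marginal worker's earnings are equalized, $p_{j,t}a_{ij}=p_{k,t}a_{ik}$; cross-family price gaps exist only because labor is imperfectly substitutable across families.

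The genuine gap is in Step 3 and your main-obstacle example, where the planned result comes out with the wrong sign.

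\emph{High price does not mean high wages.} A high $p_{h,t}$ does not imply high wages in $h$, because a higher price lowers the entry threshold and brings in lower-$a_{ih}$ workers. In your two-family example, an interior equilibrium puts in family 2 exactly the workers with $a_{i2}>c\equiv p_{1,t}/p_{2,t}$. A Pareto distribution truncated at $c$ is Pareto with scale $c$, so family-2 wages $p_{2,t}a_{i2}$ are Pareto with scale $p_{1,t}$ and index $\beta$. The premium distribution (mean ratio $\beta/(\beta-1)$) is then independent of $\Lambda_t\omega_2$ and of any bottleneck multiplier; these only move the share $\pi_2$ in family 2. The example cannot deliver ``high shadow value pays more.'' Under i.i.d.\ Fr\'echet skills with a common shape, mean wages are even equalized across families whatever the $p_{j,t}$.

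\emph{The elasticity claim is reversed.} Holding $p_{1,t}$ fixed, $\ell_{2,t}\propto p_{2,t}^{\beta-1}$, so a thin tail (large $\beta$) makes effective supply very elastic, not inelastic.

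\emph{The monotonicity you plan to prove fails.} In equilibrium $\mathrm{Var}(\log w)=\pi_2(2-\pi_2)/\beta^2<\beta^{-2}$, and the ratio of the upper $q$-quantile to the median is at most $(2q)^{-1/\beta}$. Both collapse to their degenerate values as $\beta\to\infty$, so thinning the Pareto tail compresses dispersion rather than amplifying it.

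To obtain the proposition you need a different notion of thin supply: a small qualified pool. For example, let all workers have $a_{i1}=1$, and let a mass $M$ have $a_{i2}=\bar a$ while everyone else has $a_{i2}=0$. Once the pool is exhausted, supply to family 2 is perfectly inelastic and $p_{2,t}=\Lambda_t\omega_2 g'(M\bar a)$, $p_{1,t}=\Lambda_t\omega_1 g'(L_{S,t}-M)$. The rent ratio $p_{2,t}\bar a/p_{1,t}$ is then increasing in $\omega_2$ and decreasing in $M$. That delivers both halves of the claim.
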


\begin{proposition}[Wage dispersion responds to frontier entry and drift]
\label{prop:dispersionresponse} Following an increase in frontier entry intensity $\mu_t$ (larger $N_{t+1}$) or an increase in effective drift (higher $\delta$), the cross-family distribution of shadow values steepens, implying increased within-structured wage dispersion and an expansion of the right tail of rates concentrated in complex, frontier and bottleneck families.
\end{proposition}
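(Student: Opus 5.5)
The plan is to make Proposition \ref{prop:dispersionresponse} precise as a comparative-statics statement about the vector of family shadow prices $p_{j,t}=\Lambda_t\omega_j g'(\ell_{j,t})$. Wages then follow through the Roy assignment $w_{ij,t}=p_{j,t}a_{ij}$. ``Steepens'' will mean that the ratio $p_{j,t}/p_{m,t}$ rises for frontier or bottleneck families $j$ relative to mature families $m$. Wage dispersion will be measured by the spread of $\max_j p_{j,t}a_{ij}$ across workers. Since the raw additive-aggregator allocation of Lemma \ref{lem:allocation_rule} equalizes $\omega_j g'(\ell_{j,t})$ across interior families, the dispersion must come from two other sources. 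The first is the corners (families that cannot absorb enough suitable labor). The second is the maturity dependence that enters once we use the CES aggregator of Proposition \ref{prop:bottlenecks}, whose marginal weight is $\omega_j k_{j,t}^{\rho-1}k_t^{1-\rho}$. So I will work with the generalized price $p_{j,t}=\Lambda_t\,\omega_j (k_{j,t}/k_t)^{\rho-1} g'(\ell_{j,t})$ with $\rho<1$, and treat the additive case as the limit in which the result is driven only by thin-tail supply constraints.

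\textbf{Step 1 (drift).} Use the law of motion \eqref{eq:kj_law} at a stationary distribution: $k_j=g(\ell_j)/\delta$. A rise in $\delta$ lowers every $k_j$ for given allocations, and it lowers them most in proportional terms where $\ell_j$ is small, i.e.\ for low-maturity or frontier families. With $\rho<1$, the factor $(k_j/k)^{\rho-1}$ is decreasing in relative maturity. Hence relative prices move toward families whose relative maturity falls, namely the already-scarce ones. Labor reallocates toward them by Proposition \ref{prop:reallocation_frontier}, but concavity of $g$ keeps this reallocation incomplete. I will show it is incomplete by totally differentiating the equalization condition under the additional restriction that labor of type $a_{ij}$ suited to $j$ is inelastically capped (the thin-tail assumption). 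Under that cap, prices rather than quantities absorb the shock.

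\textbf{Step 2 (entry).} Enlarging $\mathcal{N}_{t+1}$ adds families with $k_j\approx 0$. Under CES with $\rho<1$ their marginal weight is very large, so their price is high. Holding $L_{S,t}$ fixed, incumbent families lose labor, which raises their $g'$ only mildly. The top order statistic of $\{p_{j,t}\}$ therefore rises relative to the median. With $a_{ij}$ independent of $p_j$, a rightward shift in the upper tail of prices maps into a rightward shift in the upper tail of $\max_j p_j a_{ij}$. Workers with high $a_{ij}$ in frontier families earn more. Workers in standardized families stay pinned near the outside option, which is the wage floor $w_U$ from \eqref{eq:wU}. Dispersion, measured for instance by the ratio of the top quantile to the floor, therefore increases.

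\textbf{Main obstacle.} The hard step is the general-equilibrium adjustment of $\Lambda_t$ and of the floor wage. A rise in $\delta$ also changes aggregate $k$, $L_S$ (Proposition \ref{prop:delta_positive}) and hence $w_U$. Absolute prices could then fall everywhere, and the required claim concerns relative prices only. I would first try to state everything in ratios $p_j/p_m$, in which $\Lambda_t$ cancels. The floor can then be handled by normalizing wages by $w_U$. The honest conclusion is that the proposition holds under explicit sufficient conditions: $\rho<1$ and inelastic thin-tail supply for high-judgment families. In the pure additive case with fully elastic labor, equalization can eliminate the steepening, and I would note this as a boundary case.
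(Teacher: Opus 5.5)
The paper states this proposition without proof; it rests only on the informal discussion in Section~\ref{subsection:roy}, so your argument has to stand on its own. Your opening diagnosis is correct and worth keeping. By Lemma~\ref{lem:allocation_rule}, the paper's price $p_{j,t}=\Lambda_t\omega_j g'(\ell_{j,t})$ equals the multiplier $\nu_t$ in every active family, so some supply friction is needed for any cross-family dispersion. The same equalization, however, holds under CES. With mobile labor, the first-order conditions equate $\Lambda_t\omega_j(k_{j,t+1}/k_{t+1})^{\rho-1}g'(\ell_{j,t})$ across interior families exactly; the relevant maturity is $k_{j,t+1}$, not $k_{j,t}$. Concavity of $g$ only delivers an interior solution; it does not make reallocation ``incomplete.'' So the CES factor is not a second source of price dispersion. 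It only shifts relative demand, which becomes a price gap solely at capped families. The high entrant price in Step~2 is therefore entirely a consequence of the cap. Note also that for $\rho<0$ a zero-stock entrant sends $k_t$ to zero while its marginal weight stays finite, so Step~2 needs $0<\rho<1$.

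The genuine gap is Step~1, whose key claim is false. At a stationary distribution $k_j=g(\ell_j)/\delta$, so a rise in a common $\delta$ scales every $k_j$ by the same factor; it does not hit low-$\ell_j$ families harder. Because the CES aggregator is homogeneous of degree one, every $k_j/k$, and hence every factor $(k_j/k)^{\rho-1}$, is unchanged. In fact neither the stationary first-order conditions nor $\sum_j\ell_j=L_S$ involve $\delta$. So with $L_S$ and the caps held fixed, all relative prices are exactly invariant to drift, and your stated sufficient conditions ($\rho<1$ plus caps) give no steepening at all. Off the stationary distribution it is worse. Since $\partial\ln k_{j,t+1}/\partial\delta=-k_{j,t}/k_{j,t+1}$ is about $-1$ for mature families and near $0$ for new ones, higher drift \emph{raises} frontier families' relative maturity and \emph{flattens} their CES premium. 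The drift half therefore needs a different mechanism. One option is a rise in $\delta_j$ concentrated on frontier or high-judgment families. Another is the channel you set aside as an obstacle: higher drift raises aggregate $L_S$ (Proposition~\ref{prop:delta_positive} in the baseline). If that increment is absorbed only by elastically supplied families while capped families stay at $\bar\ell_j$, then $p_j/p_m$ rises for every capped $j$ and uncapped $m$. This works even in the additive case, since $g'(\ell_m)$ falls as $\ell_m$ rises, and CES only amplifies it.
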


\paragraph{Empirical content.}
Propositions \ref{prop:reallocation_frontier}--\ref{prop:dispersionresponse} are falsifiable with platform- or taxonomy-level data:
(i) hours by task family over time ($\ell_{j,t}$),
(ii) maturity proxies (rubric stability, inter-rater agreement, rework rates, novelty/edge-case flags),
(iii) frontier shocks (new task-family births, major policy or product surface changes),
and (iv) evidence of complementarity/bottlenecks (disproportionate impact of low-maturity families on downstream capability metrics).

\subsection{Link back to Section \ref{sec:modelA}}

Model \ref{sec:model3} provides a microfoundation for the aggregate capability stock $k_t$ in Section \ref{sec:modelA}. The key additional implication is that even when a given task family saturates, aggregate structured labor demand persists because (i) drift requires maintenance and (ii) innovation introduces new families at low maturity. This portfolio logic strengthens the interpretation of a positive long-run structured labor share as a structural convergence outcome rather than a transient phase.

\section{Measurements, Calibration and Empirics}
\subsection{Data and measurement}

This section maps the model objects to observable quantities using datasets that are standard in the AI labor market literature. No single dataset identifies every object in the model; instead, we organize measurement around complementary data layers that can be combined through common occupational (SOC) or sectoral crosswalks.

\subsubsection{Data layers used in the AI labor market literature}

\paragraph{Layer A: occupation--task maps (benchmarking exposure and task content).}
A large literature measures exposure to AI using task/ability descriptions from O*NET mapped to occupations. Examples include (i) AI occupational exposure measures linking O*NET abilities to AI capabilities, (ii) similarity between tech patents and job tasks, and (iii) LLM-specific exposure rubrics that rate which tasks are plausibly affected by LLM systems. These datasets are useful for constructing task exposure, locating where structured work is likely to arise, and linking outcomes to SOC-based wage and employment series.

\paragraph{Layer B: job postings and resume/career-graph data (labor demand, skills, and wages).}
Job postings microdata allow measurement of adoption-related shifts in hiring, the emergence of new roles/titles, changes in required skills, and human data worker compensation at the task level. These data are central for measuring task-family entry in labor demand and rate premia for specialized structured work.

\paragraph{Layer C: online labor markets (high-frequency task prices and dispersion).}
Our platform data provide high-frequency panels of contract types, categories, hourly rates, and realized matching outcomes. These data are particularly well suited for measuring within-structured sector wage dispersion and its response to AI capability shocks.

\paragraph{Layer D: business adoption surveys (diffusion and cross-sector variation).}
National surveys of firms that include AI use questions provide sectoral and size gradients in adoption and can be used to anchor calibration or heterogeneity in $\mu_t$ across environments.

\paragraph{Layer E: within-firm telemetry or field deployments (linking capability to productivity).}
A smaller set of studies use within-firm deployments of generative AI tools to measure productivity and heterogeneity. These environments are ideal for linking capability proxies to output and learning and for validating mechanisms such as ``rare case'' gains and experience gradients.

\subsubsection{Constructing model objects from observables}

We describe primary and secondary proxies for each object.

\paragraph{Structured effort and its long-run share $s_t$.}
\begin{itemize}
    \item \emph{Primary proxy (postings-based):} construct a postings taxonomy of ``structured human-data'' roles using keywords and embeddings that capture unique work activities to improve model capability. Let $\widehat{L}_{S,t}$ be total labor demand (postings or hires) for these categories and $\widehat{L}_t$ total labor demand; define $\hat{s}_t=\widehat{L}_{S,t}/\widehat{L}_t$.
    \item \emph{Alternative proxy (survey-based):} measure the share of worker hours in human-data online gigs relative to the total work hours.
\end{itemize}

\paragraph{Task families $J_t$, entry $\mu_t$, and frontier expansion.}
\begin{itemize}
    \item \emph{Primary proxy (task-family births in demand):} define a task-family as a stable cluster of postings/task descriptions in the structured taxonomy; count new clusters appearing in period $t$ to obtain $\hat{\mu}_t$.
    \item \emph{Secondary proxies:} (i) new job titles/skills that indicate new evaluation or integration families; (ii) new online labor categories or sharp increases in demand for niche categories.
    \item \emph{Validation and classification:} classify entry events by trigger type: capability shocks (major model/tool releases), integration shocks (new product surfaces, digitization), or governance shocks (policy/regulatory changes). This classification makes Proposition \ref{prop:reallocation_frontier} empirically meaningful.
\end{itemize}

\paragraph{Capability stock $k_t$ (latent) and maturity at the task-family level.}
We interpret $k_t$ as an aggregation of family-specific maturity stocks $\{k_{j,t}\}_{j\in J_t}$.
\begin{itemize}
    \item \emph{Primary proxy (family maturity index):} for each family $j$, construct $\hat{k}_{j,t}$ from (i) rubric stability (edit frequency, versioning intensity), (ii) inter-rater agreement or audit disagreement, (iii) rework/revision intensity, and (iv) failure rates on a fixed audit set. Aggregate to $\hat{k}_t$ using weights proportional to family importance, e.g., $\hat{k}_t=\sum_j \omega_j \hat{k}_{j,t}$.
    \item \emph{External validation:} relate $\hat{k}_t$ to observed productivity or quality metrics in within-firm deployments (Layer E) when available.
\end{itemize}

\paragraph{Depreciation/drift $\delta_k$ and its decomposition.}
we treat $\delta_{k}$ as the probability per period (e.g., per month/quarter) that a task family’s capability/maturity drops noticeably. For instance, when a rubric or eval process stops working due to shifting context and needs refresh.
\begin{itemize}
    \item \emph{Primary proxy:} estimate the probability that $\hat{k}_{j,t}$ crosses a degradation threshold (e.g., a sustained increase in failure/rework or drop in agreement). Aggregate these across task families to obtain $\hat{\delta}_t$.
    \item \emph{Decomposition:} assign shocks to (i) technological drift windows (model/tool releases), (ii) environmental drift windows (demand/content regime shifts), and (iii) organizational drift windows (policy/standards changes), yielding $\hat{\delta}_t=\hat{\delta}^{tech}_t+\hat{\delta}^{env}_t+\hat{\delta}^{org}_t$.
    \item \emph{Secondary proxy (postings/rubric churn):} measure the rate of change in evaluation-related skill requirements and compliance language in postings as an external indicator of organizational drift.
\end{itemize}

\paragraph{Family prices $p_{j,t}$ and wage dispersion.}
\begin{itemize}
    \item \emph{Primary proxy (rates by family):} use posted hourly rates/budgets or posted compensation bands (job postings) to estimate the cross-family distribution of rates $\{w_{j,t}\}$ and its dispersion (variance, interquantile ranges, right-tail shares).
    \item \emph{Sorting tests:} relate $w_{j,t}$ to thinness-of-supply proxies (screen pass rates, time-to-staff, credential requirements) and bottleneck proxies (marginal impact on downstream quality metrics). The Roy mechanism predicts both higher premia and stronger rationing signals in frontier and bottleneck families.
\end{itemize}

\paragraph{Outcome mapping (macro and micro).}
At the occupation/sector level, link exposure indices (Layer A) and adoption measures (Layer D) to employment, wage growth, and task reallocation using standard SOC-sector panels. At the within-firm level, link $\hat{k}_t$ (or $\hat{k}_{j,t}$) to productivity and quality outcomes where telemetry exists (Layer E).

\subsection{Calibration: baseline steady-state labor share}
Equation (\ref{eq:s_star}) delivers a closed-form steady-state structured labor share,
\begin{equation*}
s^*=\frac{\gamma \,\delta_k}{\gamma \,\delta_k+(1-\alpha)(r+\delta_k)},
\end{equation*}
where $\alpha$ is capital’s share in production, $r$ is the discount rate, $\delta_k$ is effective depreciation/obsolescence of the capability stock, and $\gamma$ is the elasticity of productivity with respect to capability in $A(k)=\bar A k^\gamma$. 

To illustrate magnitudes, interpret one model period as a year and use standard macro benchmarks $\alpha \in [0.33,0.40]$ and $r \in [0.03,0.05]$. For the two parameters specific to the human-data mechanism, we consider (i) a range of effective drift/obsolescence $\delta_k \in [0.08,0.25]$ (slow to medium drift) consistent with frequent distribution shifts and changing objectives, and (ii) an elasticity $\gamma \in [0.02,0.08]$ spanning a small to modest leverage of capability into productivity. 

We run a Monte Carlo exercise drawing parameters independently from uniform distributions over the above ranges. For each draw we compute the steady-state structured labor share implied by Eq.(\ref{eq:s_star}). A Monte Carlo simulation (200{,}000 draws) implies that
$s$ is concentrated around $\approx 6\%$: the mean is $5.85\%$ and the median is $5.85\%$
(with standard deviation $1.98$ percentage points). The implied central intervals are
$[3.13\%,\,8.53\%]$ for the 10th--90th percentiles and $[2.58\%,\,9.30\%]$ for the 2.5th--97.5th
percentiles. In the simulated draws, $s$ ranges from approximately $1.90\%$ to $10.51\%$.
Equivalently, $\Pr(s>5\%)\approx 62.7\%$ and $\Pr(s>8\%)\approx 17.1\%$ under these priors.

We treat the human-data parameters $(\gamma,\delta_k)$ as deliberately conservative: we impose a low elasticity of productivity with respect to capability, $\gamma\in[0.02,0.08]$, and a slow-to-moderate effective depreciation (drift) rate, $\delta_k\in[0.08,0.25]$. Both restrictions bias the implied structured labor share $s$ downward, so the resulting Monte Carlo distribution can be interpreted as a lower-bound envelope under muted leverage and muted drift.

\paragraph{Interpreting $\gamma$ and why the benchmark range is conservative.}
In the baseline, $\gamma$ is an aggregate elasticity: it compresses the many ways improved capability (higher reliability, broader coverage, fewer exceptions and rework loops) translates into economy-wide TFP on economically meaningful tasks. Because generative AI and adoption is organizationally mediated, macro-oriented assessments can imply small near-term aggregate effects even when micro studies find large gains on narrow task bundles. Choosing $\gamma\in[0.02,0.08]$ is therefore deliberately conservative: even large proportional increases in $k$ translate into modest proportional increases in $A(k)=\bar A k^\gamma$. At the same time, mechanism-consistent evidence suggests that structured human feedback and evaluation effort can move model behavior materially without changing architectures \citep{ouyang2022rlhf}, which is precisely the “codified capability” margin captured by $k_t$. If $\gamma$ were instead moderately larger (e.g., extending the upper bound to 0.12), the same Monte Carlo priors would shift the mean implied structured labor share from about 6\% toward about 8\%, reinforcing that our benchmark should be read as a lower-bound envelope rather than an aggressive forecast.

\subsection{Empirical roadmap (outline for follow-up work)}
This paper is primarily theoretical, but the mechanism yields a concrete empirical agenda. The goal of this section is to make the objects in the model operational and to outline designs that can be executed with platform telemetry and dated product/model events.

\subsubsection{Measurement construction}
(i) Task-family maturity $k_{j,t}$: rubric stability, inter-rater agreement, rework/revision intensity, failure rates on a fixed audit set, and time-to-resolution for escalations. 

(ii) Drift/depreciation $\delta_k$: estimate maturity-loss hazards at the task-family level and aggregate using the capability weights $\omega_j$; decompose into technological, environmental, and organizational drift using event windows around releases, demand/content shifts, and policy/standards changes.

(iii) Frontier entry $\mu_t$: count new task-family ``births'' per period and classify births by trigger type (new model/tool capability, sectoral digitization/integration, governance/policy).

(iv) Shadow value proxies $\Lambda_t\omega_j$: downstream impact metrics (e.g., reduction in rework loops, improved pass rates on audits, increased coverage of previously unhandled cases) and bottleneck indicators under complementarity.

\subsubsection{Identification sketches}
Design A (event studies): estimate responses of $\ell_{j,t}$, $k_{j,t}$, and wage/rate premia to dated shocks: major model/tool releases (technological drift), policy changes (organizational drift), and content-regime shifts (environmental drift). 

Design B (cross-environment comparisons): relate long-run structured labor shares and entry intensity to sectoral digitization and integration readiness, testing whether digitally intensive environments exhibit higher baseline $\mu$ and larger responses to capability shocks.

Design C (sorting tests): test Roy-style predictions by relating family-level rate premia to (a) bottleneck/marginal-value proxies and (b) thinness-of-supply proxies (screen pass rates, staffing delays), and by checking whether rate dispersion widens in periods of high $\mu_t$ or high drift.

\subsubsection{Stylized facts to report in follow-up work}
A small set of descriptive facts would already discipline the mechanism: (1) a time series of task-family births (proxying $\mu_t$) and their clustering around releases/policy changes; (2) maturity-loss hazards for a stable panel of families (proxying $\delta$); (3) within-structured rate dispersion and its concentration in frontier/bottleneck families; and (4) reallocation of structured hours toward newly born or high-drift families following shocks.

\section {Policy and Governance Implications beyond the Last Mile }
The "no last mile" result has direct implications for how policymakers and workforce planners think about AI-driven labor market transitions. The standard framing treats structured human-data work as transitional, a bridge between today's imperfect models and a future state where automation is complete. This paper's framework challenges framing at its foundation. If capability stocks depreciate continuously due to technological, environmental, and organizational drift, then structured human-data work is not a temporary buffer but a permanent structural feature of AI-augmented economies. Labor market policy built on the assumption that this category of work will eventually automate itself risks misallocating training investments, undervaluing the workers performing it, and failing to build the institutional infrastructure needed to sustain quality over time. Workforce planning frameworks should instead treat structured human-data roles as a distinct occupational category requiring career pathways, credentialing, and compensation structures commensurate with their long-run economic function.

At the platform and regulatory level, the "no last mile" result raises questions about how AI systems are governed over their operational lifespan. Current regulatory frameworks tend to focus on model deployment as the critical moment of accountability, with less attention to the ongoing maintenance work that keeps deployed systems aligned with evolving standards, user needs, and legal requirements. If structured human-data work is the mechanism through which alignment is maintained rather than a one-time input at training, then governance frameworks need to account for it explicitly, including transparency requirements around maintenance labor, standards for rubric and evaluation design, and accountability mechanisms when capability stocks are allowed to depreciate through underinvestment. This has implications not only for AI platform operators but for procurement decisions by governments and large institutions that depend on AI systems remaining reliably aligned over multi-year deployment horizons.

\section{Conclusion}
This paper argues that a durable bottleneck in the AI economy is not model architecture, but the ongoing production of \emph{structured human data}: evaluation, rubric-based judgment, auditing, exception handling, and continual updates that convert raw model capability into dependable, deployable performance. We formalize these activities as a distinct input that accumulates a reusable capability stock, raising productivity both by improving reliability on existing tasks and by expanding the frontier of tasks for which AI can be used with confidence. 

Our central theoretical result is the ``no last mile'' implication. Even as foundation models improve, demand for structured human-data work does not disappear in equilibrium because capability depreciates as the world changes: new edge cases emerge, standards evolve, and systems must be continuously monitored and refreshed. In the task-family extension, structured effort reallocates toward low-maturity or newly arriving families, and complementarity can make certain families persistent bottlenecks. The same framework also delivers a simple labor-market prediction: compensation within structured work is naturally dispersed, because workers sort across task families by comparative advantage and high-judgment families combine high marginal value with thin qualified supply. 

A conservative calibration illustrates magnitude: under muted leverage of capability into productivity and slow-to-moderate drift, the implied long-run structured labor share is centered around mid-single digits, with a meaningful right tail. Interpreted as a lower-bound envelope, these magnitudes are consistent with a world in which near-term aggregate labor-market effects can look small while the underlying reorganization margin is active and persistent.

Several limitations are important. The model is intentionally stylized: we compress heterogeneous deployment settings into a small set of parameters; we treat the mapping from capability to productivity in reduced-form; and we abstract from platform market power, contracting frictions, and general-equilibrium feedbacks that could amplify or dampen structured labor demand. On the measurement side, the key objects (capability maturity, drift, task-family entry, and shadow values) require combining multiple data layers that are not always jointly observable, and the cleanest validation often depends on within-firm telemetry that is difficult to access at scale.

These limitations point to an agenda rather than a gap. The framework suggests concrete tests using (i) event studies around model/tool releases and policy changes to estimate drift and reallocation, (ii) job-postings and online labor market data to measure task-family births and wage dispersion, and (iii) within-firm deployments to link capability proxies to productivity and learning. More broadly, the ``no last mile'' result reframes both workforce planning and AI governance: structured human-data work is not a temporary bridge to full automation, but a persistent input into sustaining alignment and reliability over the lifecycle of deployed systems.

\bibliography{nolastmile}
\bibliographystyle{apalike}

\end{document}